\newcommand{\bra}[1]{\langle{#1}\vert} 
\newcommand{\ket}[1]{\vert{#1}\rangle} 
\newcommand{\ketbra}[2]{\vert{#1}\rangle\!\langle{#2}\vert}
\newcommand{\Per}{\textrm{Per}}
\renewcommand{\L}{\left(} 
\newcommand{\R}{\right)} 
\newcommand{\dg}{^\dagger}
\newcommand{\tr}{\mathrm{tr}}
\renewcommand{\o}{\textrm{o}} 
\renewcommand{\O}{\textrm{O}} 
\newcommand{\N}{\mathcal{N}} 
\newcommand{\M}{\mathcal{M}} 
\newcommand{\tran}{\eta} 
\newcommand{\eqdef}{\mathrel{:=}}
\newcommand{\proj}[2]{\mathbb{P}_{\textrm{sym}}^{#1,#2}}
\theoremstyle{plain}
\newtheorem{theorem}{Theorem}
\theoremstyle{plain}
\newtheorem{exa}{Example}
\theoremstyle{plain}
\newtheorem{corollary}{Corollary} 
\theoremstyle{plain}
\theoremstyle{plain}
\theoremstyle{plain}
\newtheorem{remark}{Remark}
\theoremstyle{plain}
\newcommand{\daniel}[1]{{\color{black} #1}}
\newcommand{\michal}[1]{{\color{black} #1}}
\begin{document}
\title{Classical simulation of linear optics subject to nonuniform losses}

\author{Daniel J.\ Brod}
\email{danieljostbrod@id.uff.br}
\affiliation{Instituto de F\'{\i}sica, Universidade Federal Fluminense, Niter\'oi, RJ, 24210-340, Brazil}

\author{Micha\l\ Oszmaniec}
\email{michal.oszmaniec@gmail.com}
\affiliation{International Centre for Theory of Quantum Technologies, University of Gdansk, Wita Stwosza 63, 80-308 Gdansk, Poland}
\affiliation{Center for Theoretical Physics, Polish Academy of Sciences, Al. Lotników 32/46, 02-668
Warszawa, Poland}

\date{April 14th, 2020}

\begin{abstract}
We present a comprehensive study of the impact of non-uniform, i.e.\ path-dependent, photonic losses on the computational complexity of linear-optical processes. Our main result states that, if each beam splitter in a network induces some loss probability, non-uniform network designs cannot circumvent the efficient classical simulations based on losses.

To achieve our result we obtain new intermediate results that can be of independent interest. First we show that, for any network of lossy beam-splitters, it is possible to extract a layer of non-uniform losses that depends on the network geometry. We prove that, for every input mode of the network it is possible to commute $s_i$ layers of losses to the input, where $s_i$ is the length of the shortest path connecting the $i$th input to any output. We then extend a recent classical simulation algorithm due to {P.\ Clifford and R.\ Clifford} to allow for arbitrary $n$-photon input Fock states (i.e. to include collision states).  Consequently, we identify two types of input states where boson sampling becomes classically simulable: (A) when $n$ input photons occupy a constant number of input modes; (B) when all but $O(\log n)$ photons are concentrated on a single input mode, while an additional $O(\log n)$ modes contain one photon each.

\end{abstract}

\maketitle

\section{Introduction and relation to previous works}

The recent paradigm of quantum computational advantage (or supremacy), is regarded as a promising route towards demonstrating that quantum computers are more powerful than their classical counterparts \cite{Harrow2017}. Although the systems used in this approach are not expected to be universal for quantum computation, and are often not known to perform explicitly useful tasks, they allow us to test quantum mechanics in the limit of high computational complexity. The pursuit of a near-term demonstration of quantum {superiority} is also aligned with the overarching goals of the field of quantum computing, as it pushes the development of technologies that will undoubtedly be necessary for scalable universal quantum computers, and allows us to better understand the effects of real-world imperfections in intermediate-size quantum systems. 

One candidate for demonstrating quantum advantage is nonadaptive linear optics, or boson sampling \cite{Aaronson2013a}. In this model, an $n$-photon $m$-mode Fock state evolves according to a passive $m$-mode linear-optical transformation $U$ and is subsequently measured by particle-number resolving detectors [see \cref{fig:inputs}(a)]. This physical process leads to a probability distribution $p_U (\vec{T})$. The seminal contribution of \cite{Aaronson2013a} was to show that, subject to certain complexity-theoretic conjectures, for \emph{generic} transformations $U$ it is hard for a classical computer to generate samples from a distribution suitably close to $p_U (\vec{T})$. Besides being an elegant and physically-motivated computational model and, arguably, the first proposal of quantum supremacy as we understand it today, boson sampling also benefits from the technology and expertise developed by the quantum optics community in the last decades \cite{RevPhot2019}.

Since the original proposal several small-scale boson sampling experiments have been reported \cite{Broome2013,Crespi2013b,Spring2013,Tillmann2013,Spagnolo2014,Carolan2014,Carolan2015,Bentivegna2015,Loredo2017,He2017,Wang2017,Wang2018}, with state-of-the-art ranging up to five photons with near-deterministic quantum dot sources \cite{Loredo2017,He2017,Wang2017}. From the opposite end, there has also been intense activity in the development of classical simulation algorithms \cite{Neville2017,Clifford2017,Roga2019}, and current supercomputers are expected to simulate 50-photon experiments without much difficulty \cite{Wu2016,Clifford2017}. A more refined  analysis of the complexity-theoretic arguments underpinning boson sampling has suggested 90 photons as a concrete milestone for the demonstration of quantum computational advantage \cite{Dalzell2018}.

Many technological and theoretical challenges remain that can undermine the scalability of boson sampling. Several sources of imperfection affect linear-optical experiments, and it is essential to understand which can be mitigated and which degrade the computational power of the model. This boundary between classical simulability and quantum advantage is an area of intense investigation, with losses in particular receiving most attention. In \cite{Aaronson2016} it was shown that boson sampling retains its computational power if a constant number of photons is lost. In the other extreme, recent papers showed that lossy boson sampling can be efficiently simulated when less than $\sqrt{n}$ photons are left \cite{GarciaPatron2017,Oszmaniec2018} for arbitrary interferometers, or even when a suitably high  \emph{constant fraction} of the photons is lost for typical Haar-random interferometers \cite{Renema2018} \michal{(see also a recent follow-up \cite{Moylett2019})}. Boson sampling has also been investigated under the effect of fabrication noise in the linear-optical components \cite{Arkhipov2015,Leverrier2014}, losses combined with dark counts \cite{Rahimi-Keshari2016}, partial photon distinguishability \cite{Renema2017}, and Gaussian noise in the experimental data \cite{Kalai2014}.  

In this work we build on our previous findings of \cite{Oszmaniec2018} and solve some of the open problems posed there. The main result of \cite{Oszmaniec2018} stated that, when less than $\sqrt{n}$ out of $n$ photons are left, it is possible to approximate a lossy boson sampling state by a state of distinguishable photons, which is known to be classically simulable. This approximation is done at the level of the \emph{input} state, and holds for arbitrary linear-optical experiments. For this result to work, it was necessary to assume losses happen at the input--a reasonable assumption when the interferometer is balanced, but which might fail for unbalanced network geometries \cite{Reck1994}. To justify this, in \cite{Oszmaniec2018} we also showed it is possible to commute $s$ layers of mode-independent losses to the input of the network, where $s$ is the smallest number of beam splitters in any path connecting any input to any output. However, it is possible to construct a universal linear-optical transformation where $s=1$ \cite{Reck1994}, suggesting a possible route to circumvent the algorithm of \cite{Oszmaniec2018}.

Here, we extend the results of \cite{Oszmaniec2018} in different directions, with the goal of adapting them for very unbalanced network geometries \cite{Reck1994}. In particular, our main result is that \michal{an} unbalanced network \daniel{geometry} cannot be used to circumvent classical simulation due to losses. Our intermediate results regarding the accumulation of losses in linear-optical networks and \michal{an extension of the} algorithm for simulation of boson sampling \cite{Clifford2017} can be also be of independent interest to the quantum optics and quantum complexity theory communities. 

The paper is organized as follows. First, in \cref{sec:inf} we give an informal overview of our results. Sections \ref{sec:result1}-\ref{sec:result4} contain formal statements and proofs of our main results. Finally, \cref{app:margprobs} contains the technical details needed in the proof of \cref{thm:binnedBS}.

{\emph{Note added.} After completion of this work, the first experimental demonstration of quantum computational advantage was realised in  a chip consisting of 53  superconducting qubits \cite{Suprem2019}. The experiment implemented random circuit sampling, which constitutes another paradigm for realising quantum computational superiority \cite{Boixo2016}. Around the same time, a boson sampling experiment with 20 photons occupying 60 optical modes was reported in \cite{BosNew2019}.}

\begin{figure}[t]
    \centering
    \includegraphics[width=0.7\textwidth]{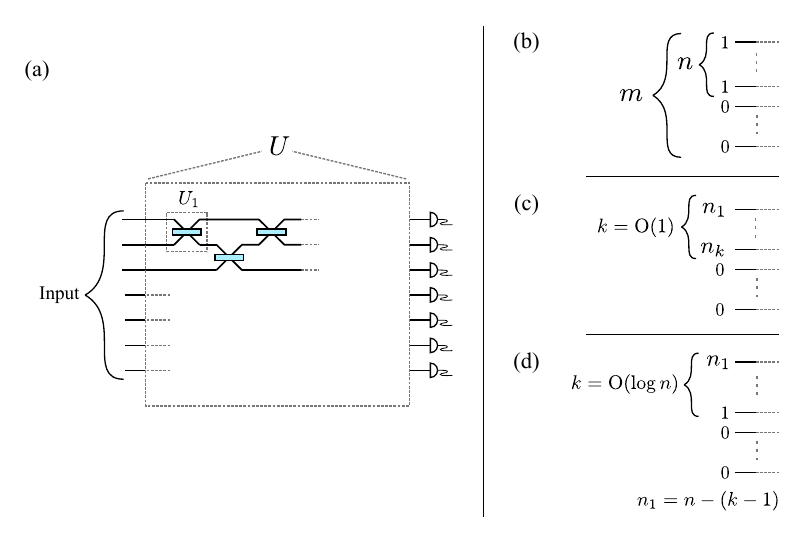}
    \caption{The boson sampling setup (in absence of losses). (a) An $n$-photon state enters a linear optical network described by $m\times m$ unitary matrix $U$ and is subsequently measured by particle-number resolving detectors. (b) Standard boson sampling input, where the first $n$ modes are occupied by individual photons. (c) Input of type A, where $n$ photons are equally distributed in a constant number of input modes. (d) Input of type B, where $O(\log(n))$ \daniel{modes} are occupied by single photons and the remaining photons occupy a single mode.}
    \label{fig:inputs}
\end{figure}

\textbf{Notation and main concepts.}  Throughout the paper we use the following notation. Given two positive-valued functions $f$ and $g$, we write  $f=\o(g)$ if $\lim_{x\rightarrow\infty} f(x)/g(x) = 0$ and $f=\O(g)$ if $ \lim_{x\rightarrow\infty} f(x)/g(x) < \infty$. Likewise, $g=\omega(f)$ if $f=\o(g)$. Finally, $f\approx g$ when $ \lim_{x\rightarrow\infty} f(x)/g(x)=1$. By $\lfloor x \rfloor$  ($\lceil x \rceil$) we denote the largest (smallest) integer smaller (larger) or equal to $x$. 
We denote the total number of particles by $n$, the total number of modes by $m$, and by $[m]$ the $m$-element set $\lbrace 1,2,\ldots, m\rbrace$. We alternatively refer to input modes as ``bins'', whenever there are many photons initialized in that mode (accordingly, we refer to this situation as bunching or binning). \daniel{We also refer to any state where more than one photon occupies the same mode as a collision state}.

The Hilbert space of photonic states in $m$ modes is spanned by Fock states, characterized by a definite photon number $s_i$ in each optical mode $i$. In what follows we denote Fock states by  $\ket{\vec{S}}$, where $\vec{S}=(s_1,\ldots,s_m)$ is a vector of occupation numbers. In the formalism of second quantization we write $\ket{\vec{S}}=\prod_{i=1}^m \frac{(a^\dagger_i)^{s_i}}{\sqrt{s_i!}}\ket{0}$, where $a^\dagger_i$ is the standard creation operator associated to bosonic mode $i$ and $\ket{0}$ is the vacuum state. Passive linear optical transformations are associated with $m\times m$ unitary matrices $U$, which induce the transformation of modes $ a^\dagger_i \mapsto a'^\dagger_i = \sum_{j=1}^m U_{ji} a^\dagger_i$. When this transformation is applied to input state $\ket{\vec{S}}$ the output probabilities are given by
\begin{equation}\label{eq:statistics}
p_U\left(\vec{S}\rightarrow\vec{T}\right)=\frac{\left|\mathrm{Per}(U_{ST})\right|^2}{\prod_{i=1}^m s_i! \prod_{i=1}^m t_i!}\ ,
\end{equation}
where $\mathrm{Per}(A)$ denotes the permanent of $A$ and $U_{ST}$ is an $n\times n$ matrix constructed according to the following prescription. We first construct an $m \times n$ matrix $U_S$ by taking $s_i$ copies of the $i$th column of $U$, then construct an $n \times n$ matrix $U_{S,T}$ by taking $t_j$ copies of the $j$th row of $U_T$. \footnote{{The arguments for computational hardness of Boson sampling rely on the fact that computation of the permanent is a $\#\mathrm{P}$-hard problem. Importantly, however, a device implementing boson sampling does not, in general, have the power to accurately compute or estimate permanents that appear in Eq.\ \eqref{eq:statistics}.}}.  Boson sampling can also be framed in the language of first quantization, which we briefly use (see e.g. \cite{Moylett2018} or \ the introductory section of \cite{Oszmaniec2018} for the translation between the two descriptions).

\section{Overview of main results}\label{sec:inf} 

In this Section we give an informal summary of our main findings and discuss the relation to previous work.

\textbf{Result 1: Extracting nonuniform losses from the network.} We improve the procedure of \cite{Oszmaniec2018} to commute losses to the input of a network. For simplicity, we make the following assumptions regarding losses. We assume that all losses happen within the network and are located at  beam splitters, and that every beam splitter carries the same amount of loss. Although losses due to imperfect sources and detectors are experimentally important \cite{Wang2017}, they are in principle constant, whereas losses due to linear-optical elements dictate how overall transmissivity of the network scales as the experiments become more complex. Furthermore, the geometry of a network is the main source of loss non-uniformity which we wish to study. Experimentally, beam splitter losses are also relevant e.g.\ in integrated photonic devices, since waveguide bends required to build directional couplers can cause photons to leave in unguided modes \cite{Crespi2013b}. We represent a lossy linear-optical \michal{network} as in \cref{fig:losspullout}(a).

\begin{figure}[t]
    \centering
    \includegraphics[width=0.8\textwidth]{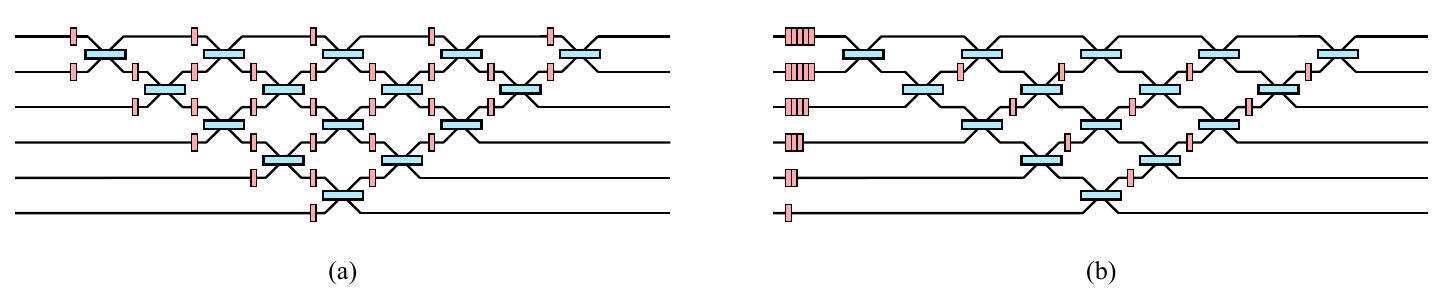}
    \caption{Extraction of losses from an unbalanced photonic network. Blue rectangles correspond to beam splitters and associated phase shifters, while orange rectangles are loss elements. We assume all loss elements are equal, and use the fact that two equal loss elements in the arms of a beam splitter commute with it. We apply Result 1 to the triangular decomposition of Reck \emph{et al} \cite{Reck1994} in (a) as an example. It is possible to commute to the input of the network different rounds of losses according to the shortest path between that input and any output, leading to the equivalent network represented in (b).}
    \label{fig:losspullout}
\end{figure}

The main figure of merit for the result of \cite{Oszmaniec2018} is the \emph{length} of input-output paths inside a network, measured in the number of beam splitters a photon has to traverse along that path. Previously it was known that, if the shortest input-output path has length $s$, we can rewrite the network in such a way that it has $s$ layers of mode-independent losses at the input. However this is not particularly useful for unbalanced networks such as that of Reck \emph{et al} \cite{Reck1994}, depicted in \cref{fig:losspullout}(a), as in that case we can only pull out one layer of uniform losses. 

Our new result improves on this by showing how to extract \emph{nonuniform} loss layers. More specifically, let $s_i$ be the shortest path between input $i$ and any output within some network. We show how to write an equivalent network that is preceded by a round of losses where input $i$ suffers the effect of $s_i$ consecutive loss elements. The precise formulation of this result can be found in \cref{thm:nonuniform}.  Our proof is efficient and constructive, and works for networks of arbitrary geometry. The final result is illustrated in \cref{fig:losspullout}(b) for the case of the network of Reck \cite{Reck1994}. Overall, our result gives a more efficient description of the network where asymmetry is taken into account, and which we use to prove result 4 described later. 

\textbf{Result 2: Classical simulation of boson sampling with collision input states.} It is well known that boson sampling is efficiently classically simulable if all $n$ photons are initialized in a single mode, i.e., input state $\ket{n, 0, 0, \ldots, 0}$. This follows from the fact that the permanent of a matrix of $n$ repeated columns can be computed trivially in linear time \cite{Aaronson2013a}, or alternatively from the fact that if all photons are initialized in the same mode they behave as distinguishable particles \cite{Oszmaniec2018}. Here we prove a stronger version of this result, by identifying two new types of input state which allow for efficient classical simulation of boson sampling [see \cref{fig:inputs}(b) and \cref{fig:inputs}(c)]. The first (type A) corresponds to $n$ photons concentrated in a constant number of input modes. The second (type B) is when all but $O(\log n) $ photons are concentrated on a single input mode, while an additional $O(\log n)$ modes contain one photon each.

Concretely, we provide a strong simulation of boson sampling (i.e.\ computing outcome probabilities) by showing how an expression for the permanent function from \cite{Chin2017} can be computed efficiently for inputs of types A and B (though a similar scaling follows from an Appendix of \cite{Valery2013}). We then {generalize} a result of \cite{Clifford2017} for weak classical simulation (i.e.\ producing samples from the correct distribution) to allow for bunched inputs, showing that it is efficient for inputs of types A and B. Our {extension} of the result of \cite{Clifford2017} is also based on a physical description of the state (via first quantization) rather than combinatorial considerations, and we believe it could make the original result more transparent for researchers with a stronger physics background. The technical formulation of this result can be found in \cref{thm:binnedBS} and in \cref{cor:TwoTypesClass}.
 
Besides giving us a new understanding of the regimes where efficient classical simulability of boson sampling is possible, and ruling out attempts at demonstrating a quantum advantage with highly concentrated bosonic input states, this result also serves as a nontrivial stepping stone towards result 3 described below.

\textbf{Result 3: Classical simulation of lossy boson sampling \michal{with a few}  lossless modes}. We improve the classical simulation algorithm of \cite{Oszmaniec2018} by allowing the lossy boson sampling instance to be coupled to some lossless modes. Concretely, suppose that $n$ photons pass through a lossy channel in such a way that less than $\sqrt{n}$ survive, and are then combined with up to $c \log n$ photons input in modes that have perfect transmission. We show that the classical simulation based on the ideas similar those of \cite{Oszmaniec2018} works for this setting as well. The technical formulation of this result can be found in \cref{thm:notLOSTclass}.

The classical simulation of \cite{Oszmaniec2018} was based on approximating a lossy bosonic state $\rho$ by another state $\sigma$ where photons effectively behave as distinguishable particles, then leveraging known results for classical simulation in that case \cite{Aaronson2013a}. Our new result is based on appending up to $c \log n$ single-photon states to $\sigma$, and using this state to simulate the state obtained by also appending $c \log n$ photons to $\rho$. The state $\sigma$ is a convex combination of states of type B so we can leverage result 2 above. The distance between the two states (and hence the error bound on the simulation) is the same as in \cite{Oszmaniec2018}, but the simulation is more costly due to the use of \cref{thm:binnedBS}. This result is also essential to prove result 4 that follows.

\textbf{Result 4: Nonuniform losses do not avoid classical simulation.} By combining results 1--3 we obtain our main result. Informally it states that the classical simulation algorithms of \cite{Oszmaniec2018,GarciaPatron2017} cannot be circumvented by the design of unbalanced networks such as that of Reck \emph{et al} \cite{Reck1994} (see \cref{thm:notUNIFclass} for the formal statement). This was left as an open question in \cite{Oszmaniec2018}. The idea was that, in an unbalanced network, there might be some input modes where losses are very mild, and the high chance of survival of these photons might break the assumptions behind the simulation algorithm in \cite{Oszmaniec2018,GarciaPatron2017}. Here we show that classical simulation remains possible as long as only $O\log(n)$ inputs have \emph{small} optical path to the output. Our result can be immediately applied to lossy interferometers with the geometry proposed in \cite{Reck1994}.  This closes one avenue for scaling up boson sampling experiments, but is also relevant for current experiments since this geometry is common in integrated-photonic implementations \cite{Carolan2015,Crespi2013b}.

In what follows we outline our argument for the simulability of the triangular construction of \cite{Reck1994} [cf.\ \cref{fig:losspullout}(b)]. This construction has the property that the shortest path from mode $i$ to any output has length $i$ (if we label modes starting from the lowest one). Suppose the input state we wish to simulate occupies the bottom $n$ modes in an attempt to avoid losses as much as possible. To simplify the argument, suppose now that the bottom $c \log n$ modes are lossless (or include their loss channel as part of the network rather than the input state). By result 1 we can extract more than $c \log n$ loss elements for all modes above mode $c \log n + 1$. However, as discussed in \cite{Oszmaniec2018}, given a per-beam splitter loss parameter $\eta$, there is always some $c$ such that, if an $n$-photon state suffers $c \log n$ rounds of losses, we expect less than $\sqrt{n}$ photons to survive on average. Therefore, the input to the network is amenable to approximation by a convex combination of \daniel{states} of type B, as described in result 3, and so efficient classical simulation is possible.

\section{Extracting nonuniform losses from a lossy network} \label{sec:result1}


Linear-optical networks are the natural implementation of general multimode interferometers. They are typically built by composition of simple few-mode transformations, often easier to realize in practice, which are then arranged in some network layout (see \cite{Reck1994,Carolan2014} for the commonly used geometries, \cite{Bouland2014,Sawicki2016,SawickiKarnas2017} for  mathematical perspective, and \cite{Slowik2018} for the generalization to non-unitary transformations). At the same time, each small transformation also introduces some loss. The purpose of this Section is to provide an effective description of lossy linear-optical networks that takes into account  their geometric properties.  

Our focus is on losses induced by linear-optical elements within the circuit, such as beam-splitters, which determine how losses \emph{scale} with the size of experiments. We disregard losses in detectors and sources, as well as uniform losses associated with free-space or waveguide propagation. These do not depend on the network layout and can be easily incorporated into our description as additional losses at the input of the network \daniel{(alternatively, any path-dependent transmission loss can be incorporated into the nearest beam splitter)}. A loss channel that acts uniformly on $m$ modes is parameterized by a transmission probability $\tran$, and its action on an $n$-photon state can be written as:
\begin{equation}\label{eq:BSuniformMODEL}
\rho \mapsto \Lambda_\tran \L \rho \R \eqdef \sum_{l=0}^n \binom{n}{l} \tran ^l (1-\tran)^{n-l} \tr_{n-l} \L \rho \R ,
\end{equation}
where $\binom{n}{l} \tran^l (1-\tran)^{n-l}$ is the probability that $l$ out of $n$ particles remain. This is equivalent to adding an extra beam splitter that, with amplitude $\sqrt{1-\tran}$, diverts each photon into some inaccessible mode [see \cref{fig:bslosses}(a)]. It can be easily shown that $\Lambda_{\tran_2} \circ \Lambda_{\tran_1} = \Lambda_{\tran_1 \tran_2}$.	

\begin{figure}[h]
    \centering
    \includegraphics[width=0.7\textwidth]{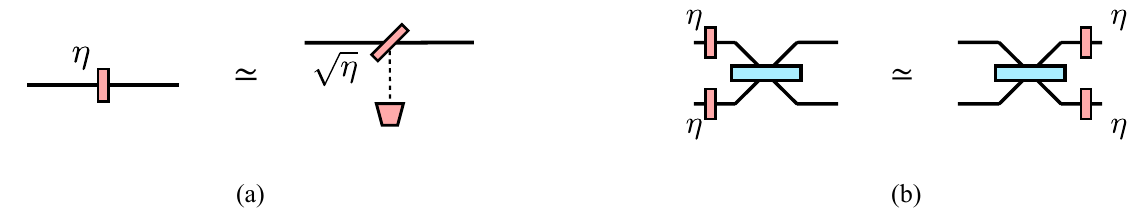}
    \caption{(a) A loss element with transmission probability $\tran$, which we represent by an orange rectangle, can be modeled by a beam splitter that, with amplitude $\sqrt{1-\tran}$, reroutes each photon into an inaccessible mode. (b) We assume each beam splitter in the network is accompanied by two identical loss elements at its inputs. Uniform losses commute with linear optics, a property we apply repeatedly at the level of beam splitters to deal with nonuniform losses in general unbalanced networks.}
    \label{fig:bslosses}
\end{figure}

It is well-known that a layer of identical loss elements commutes with any linear-optical network acting only on the same modes. In \cref{fig:bslosses}(b) this is represented for $m=2$. This is the rationale for a standard assumption that losses happen at the input of a network, as it is usually assumed that they are approximately mode-independent. In practice, however, losses can also occur in nontrivial optical elements composing a network. In what follows,  we assume for simplicity that each beam splitter carries two identical loss elements at its inputs.  In \cite{Oszmaniec2018} we showed the following. Given some linear-optical network built out of the building block of \cref{fig:bslosses}(b), let $s$ be the smallest number of beam splitters that a photon has to traverse within the network starting from any input to reach any output. It is possible to rewrite this network as a different (possibly still containing some losses) linear-optical network preceded by $s$ uniform layers of loss elements. This formalized the assumption that losses happen at the input by basing it on a geometrical property of the network, namely the smallest number of loss elements that will affect any photon as it propagates inside it.

The above result works very well for symmetric networks, such as that of Clements \emph{et al} \cite{Clements2016}, depicted in \cref{fig:circuits}(a), but not for unbalanced network such as the triangular decomposition of Reck \emph{et al} \cite{Reck1994}, depicted in \cref{fig:circuits}(b) (often used in integrated photonics \cite{Carolan2015,Crespi2013b}). For the construction of Reck \emph{et al} specifically, the previous result only allows commuting a single layer of losses to the input. This led us to conjecture it might be possible to circumvent the efficient classical simulation of \cite{Oszmaniec2018} by using purposely unbalanced networks. Motivated by this we now prove a stronger result that allows us to treat each mode individually. Namely we show that for, each input mode $i$, it is possible to pull $s_i$ loss elements to the beginning of the network, where $s_i$ is the smallest number of beam splitters in any path between mode $i$ and any output mode.

\begin{figure}
    \centering
    \includegraphics[width=0.8\textwidth]{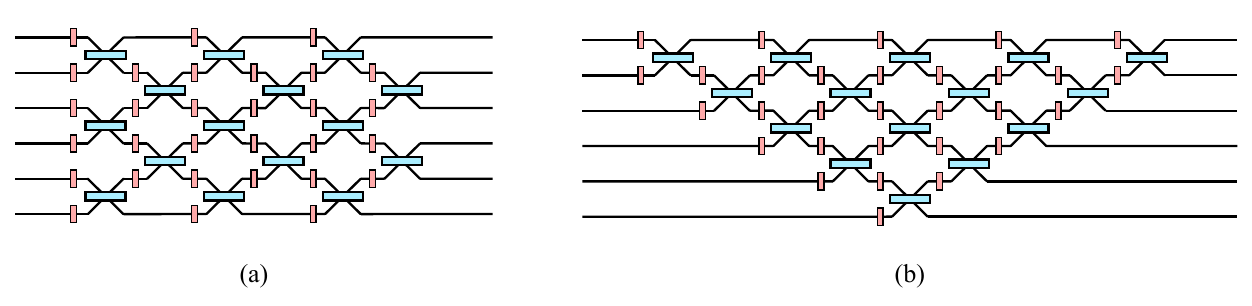}
    \caption{Two decompositions of universal linear optics on 6 modes. (a) The symmetric decomposition of Clements \emph{et al} \cite{Clements2016}. (b) The triangular decomposition of Reck \emph{et al} \cite{Reck1994}. The triangular decomposition displays very nonuniform losses casting doubt on the standard assumption that losses happen at the input. Orange rectangles represent losses, not phase shifters, as common in similar figures in the literature. We do not represent phase shifters explicitly.}
    \label{fig:circuits}
\end{figure}{}

\begin{theorem}\label{thm:nonuniform}
Let $\N$ be an $m$-mode linear-optical network composed of beam splitters, each associated with a transmission probability $\eta$.  Let the \emph{length} of a path connecting some input and output modes be the number of beam splitters that a photon encounters along that path. Let $s_i$ be the shortest path between input mode $i$ and any output mode. The quantum channel associated with the entire network can be decomposed as follows:
\begin{equation}\label{eq:nonBALdec}
\Lambda_{\N} = \Lambda_{\tilde{\N}} \circ \Lambda_{\vec{\tran}}\ .
\end{equation}
Here $\Lambda_{\vec{\tran}}$ is a nonuniform loss channel described by $\vec{\tran} = (\tran_1,\tran_2, \ldots , \tran_m)$, where $\tran_i=\tran^{s_i}$ is the transmission probability for mode $i$, and $\Lambda_{\tilde{\N}}$ is another (possibly still lossy) linear-optical network that can be described explicitly and efficiently \daniel{(in the number of modes and beam splitters in the network)}.
\end{theorem}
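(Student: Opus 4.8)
The plan is to reduce the claim to a purely combinatorial statement about relocating loss elements, which I then establish by a single greedy sweep through the network. The only physical input I will use is the commutation property already recorded above: a pair of \emph{identical} loss elements sitting on the two input arms of a beam splitter is equivalent, as a channel, to the same pair sitting on its two output arms, since the uniform two-mode loss channel commutes with the beam-splitter unitary. Together with the obvious fact that a single loss element slides freely along any wire segment that contains no beam splitter, these are the only moves needed. Accordingly I would model $\mathcal{N}$ as a directed acyclic flow graph whose edges are wire segments and whose internal nodes are beam splitters, placing one loss ``token'' on each input arm of every beam splitter. The goal becomes showing that, using only the two moves above, the tokens can be rearranged so that the input edge of mode $i$ carries exactly $s_i$ of them at the very front, with all remaining tokens (still attached to beam splitters) absorbed into $\tilde{\mathcal{N}}$.

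The key device is a potential on edges: for a directed edge $e$ let $\delta(e)$ be the minimal number of beam splitters on any directed path from the head of $e$ to an output, so that $\delta=0$ on edges feeding an output, $\delta(e_i)=s_i$ on the input edge of mode $i$, and $\delta(a)=1+\min(\delta(c),\delta(d))$ whenever $a$ is an input arm of a beam splitter with output arms $c,d$. I would then process the beam splitters in \emph{reverse} topological order and, at each beam splitter with output arms $c,d$, push $\min(\delta(c),\delta(d))$ token-pairs upstream (one token onto each input arm per pair). The invariant to maintain is that immediately after a beam splitter has been processed, each of its input arms carries exactly $\delta$-many tokens; the recursion for $\delta$ shows this is precisely $1+\min(\delta(c),\delta(d))$, matching the single initial token plus the $\min(\delta(c),\delta(d))$ tokens pushed up.

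Feasibility is what makes the \emph{minimum} in the definition of $s_i$ essential. By the time a beam splitter is processed, its two output arms have already received their quota from downstream, i.e.\ carry $\delta(c)$ and $\delta(d)$ tokens; since both exceed $\min(\delta(c),\delta(d))$, there are always enough tokens to perform the required pushes and no token count ever goes negative. (Had $s_i$ been defined through a maximum or a sum rather than a shortest path, the scarcer output arm would fail to supply enough tokens and the sweep would stall.) Running the sweep to completion leaves $\delta(e_i)=s_i$ tokens on each global input edge, which constitute exactly the channel $\Lambda_{\vec{\eta}}$ with $\eta_i=\eta^{s_i}$; the leftover tokens sitting inside the circuit, together with the beam splitters, form a (possibly still lossy) linear-optical network $\tilde{\mathcal{N}}$. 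Since the procedure only records, for each beam splitter, how many pairs were pushed through it, the whole construction is manifestly polynomial in $m$ and the number of beam splitters, giving the claimed efficient and constructive description; specialising to $s_i\equiv\min_j s_j$ recovers the uniform result of \cite{Oszmaniec2018}.

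The step I expect to require the most care is not any individual move but the rigorous setup that licenses the combinatorial reduction for networks of \emph{arbitrary} geometry: fixing a topological order, justifying that sliding and pairwise commutation are the only nontrivial operations and that they act as claimed at the level of channels rather than merely amplitudes, and verifying the token-availability invariant through the entire reverse sweep, including at beam splitters whose two output arms have unequal $\delta$ and thus retain a surplus of tokens downstream. Making the bookkeeping fully precise---ideally by phrasing the sweep as an induction on the number of beam splitters that preserves the invariant ``every processed beam splitter carries $\delta$-many tokens on each input arm''---is the technical heart of the argument, with the shortest-path (minimum) structure doing the real work of guaranteeing consistency.
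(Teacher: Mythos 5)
Your proposal is correct and follows essentially the same route as the paper's proof: a reverse sweep from the outputs, commuting $\min(t_i,t_j)$ identical loss-element pairs through each beam splitter, with the inductive invariant that every wire carries a number of loss elements equal to its shortest-path distance to an output. Your token/potential-function bookkeeping on a DAG is just a more combinatorial phrasing of the paper's layer-by-layer induction on intermediate networks $\tilde{\M}_k$, including the same observation that the $\min$ (shortest path) is exactly what makes the pushes feasible.
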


\begin{proof} We assume without loss of generality that the network is arranged in $D$ layers, where each layer contains at most $\left \lfloor{m/2}\right \rfloor$ beam splitters acting on disjoint pairs of modes. Under our assumption that each beam splitter is preceded by one loss element in each of its inputs, the length of some input-output path corresponds to the number of loss elements the photon encounters in that path.

We prove the Theorem by iteratively applying the identity of \cref{fig:bslosses}(b). We begin with the \emph{last} layer in the network, and build it by progressively adding each previous layer one beam splitter at a time. At each step we commute losses through the new beam splitter to make sure the new network has a specific form.	

Consider initially the more general situation of \cref{fig:iteration}(a). We have some linear-optical network $\tilde{\M}$. It is composed of a network $\M$, that can contain any number of beam splitters and loss elements distributed in an arbitrary manner, preceded by a layer of loss elements with transmissivity $\mu_i$ in mode $i$. We add to the beginning of $\M$ a new beam splitter, $\mathcal{B}$, acting on modes $i$ and $j$ (and two corresponding loss elements with transmissivity $\tran$). We cannot necessarily commute $\mu_i$ and $\mu_j$ through $\mathcal{B}$, as they may not be equal, but we can commute losses corresponding to the smallest of them. Suppose $\mu_i \leq \mu_j$, and write $\mu_j = (\mu_j/\mu_i) \mu_i$. We can commute two loss elements $\mu_i$ through $\mathcal{B}$, leaving behind a loss element $\mu_j/\mu_i$ in mode $j$. We are now left with the situation of \cref{fig:iteration}(b).

\begin{figure}
    \centering
    \includegraphics[width=0.7\textwidth]{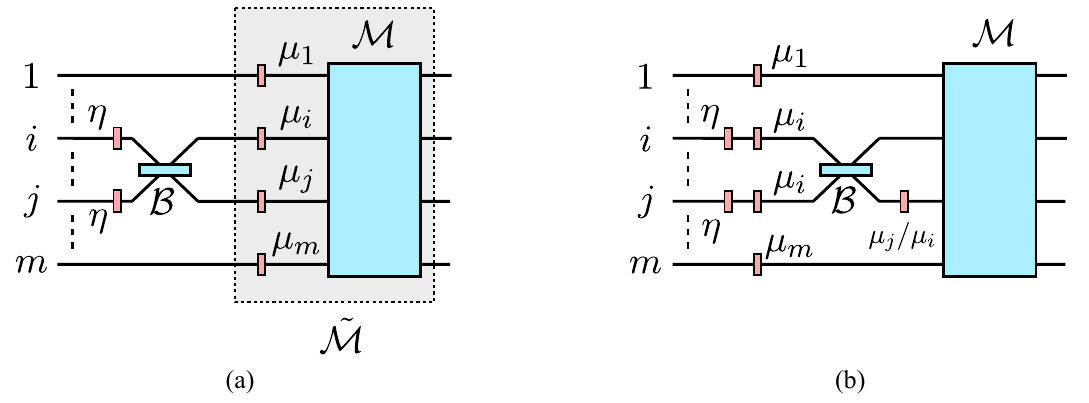}
    \caption{(a) A network $\tilde{\M}$ composed of some arbitrary (possibly lossy) linear-optical network $\M$ preceded by a round of nonuniform losses. Beam splitter $\mathcal{B}$, acting on modes $i$ and $j$, precedes $\tilde{\M}$. (b) Losses at the input of $\tilde{\M}$ have been commuted through $\mathcal{B}$. Since $\mu_i$ and $\mu_j$ might not be equal, only losses corresponding to the smallest of them (in this figure, $\mu_i$) commute through $\mathcal{B}$. \cref{thm:nonuniform} iterates this procedure \michal{to} extract the greatest number of loss elements possible to the input.}
    \label{fig:iteration}
\end{figure}{}

Let us now return to our network $\N$ and apply the above procedure. Start with the last layer of beam splitters. Clearly it can be written in the form of $\tilde{\M}$ of \cref{fig:iteration}(a). We now add each previous layer, one beam splitter at a time, always applying the argument of the previous paragraph and \cref{fig:iteration}. Suppose that, at step $k$ of this iteration, we have some intermediate network $\tilde{\M}_k$ as in \cref{fig:iteration}(a). Suppose also that $\mu_i = \tran^{t_i}$ where $t_i$ is the length of the shortest path between input mode $i$ of $\M_k$ and any output. In other words, we assume that at step $k$ the intermediate network satisfies the statement of our theorem. 

Consider now the next beam splitter to be added to the network, $\mathcal{B}$, acting between modes $i$ and $j$. By the reasoning of \cref{fig:iteration} we can now commute $\min{(t_i, t_j)}$ loss elements through both modes of $\mathcal{B}$. The new network consisting of $\tilde{\M}_k$ and $\mathcal{B}$ has three properties:
\begin{itemize}
	\item[(i)] the shortest path starting from either input $i$ or $j$ has length $\min{(t_i, t_j)} + 1$, 
	\item[(ii)] there are $\min{(t_i, t_j)} + 1$ loss elements at both inputs modes $i$ and $j$, and
	\item[(iii)] the action of the network on modes other than $i$ and $j$ is unchanged.
\end{itemize}
Clearly the new network, $\tilde{\M}_{k+1}$, also satisfies the statement of our theorem. Therefore, by induction, this holds all the way until the network $\N$ is complete. Furthermore, this procedure is constructive by also giving explicitly the form of $\tilde{\N}$. \daniel{We conclude the proof by noting that the above scheme is also efficient, in the sense of requiring only polynomial effort in the number of modes and beam splitters in the network.}

\end{proof}

\begin{remark} As stated above, our result can be easily extended to include losses in the preparation and measurement stages of the experiment. Indeed, let $\Lambda_{\vec{\tran}_{in}}$ and $\Lambda_{\vec{\tran}_{out}}$ be channels describing (possibly non-uniform) losses occurring during the state-preparation and readout phases respectively. The quantum channel describing the whole physical process is then given by $ \Lambda_{tot}= \Lambda_{\vec{\tran}_{out}} \circ \Lambda_{\N} \circ \Lambda_{\vec{\tran}_{in}} $. Combining this with \eqref{eq:nonBALdec} we get a decomposition
\begin{equation}
\Lambda_{tot} = \Lambda_{\tilde{\N}'}\circ \Lambda_{\vec{\tran}'}\ ,
\end{equation}
where $ \Lambda_{\tilde{\N}'} = \Lambda_{\vec{\tran}_{out}} \Lambda_{\tilde{\N}} $ and $\vec{\tran}'= (\tran_{in,1}\tran_1,\tran_{in,2}\tran_2, \ldots , \tran_{in,m} \tran_m)$. 

\end{remark}

\daniel{
\begin{remark}
Inspection of the proof of \cref{thm:nonuniform} shows that the decomposition of Eq.\ \eqref{eq:nonBALdec} also holds in a network in which different optical elements induce different loss probabilities. In that case, it is still possible to write a decomposition similar to Eq.\ \eqref{eq:nonBALdec} where, for each input $i$, we pull out a factor 
\begin{equation*}
\eta_i= \max_{L_i} \prod_{j \in L_i} \eta_{i,j},
\end{equation*}
where $L_i$ are all paths connecting input $i$ to some output, $j$ indexes all linear-optical elements connected to input $i$, each with transmissivity $\eta_{i,j}$, and the product is taken over all $j$'s in path $L_i$. More explicitly, we pull out sufficient losses such that the final network $\tilde{\N}$ has one completely lossless path from input $i$ to some output. To see how this follows from the proof of the Theorem, note two things about the procedure of \cref{fig:iteration}: (i) it does not require each beam splitter to have equal loss, and (ii) at every iterative step, it leaves one output port of $\mathcal{B}$ lossless. Therefore, it will create one completely lossless path from each input to some output bu construction.

Finally we point out that, if each beam splitter has a different amount of loss, this might affect conclusions about finite-sized experiments drawn from our results, but it will not change any of our asymptotic conclusions.
\end{remark}}

\section{Simulation of boson sampling with binned inputs} \label{sec:result2}

In this Section we describe two new types of input states for which boson sampling is easy to simulate classically. One of them (type B) will be used later in a classical simulation for optical networks with nonuniform losses. {We start by providing a generalization of the algorithm by P.\ Clifford and R.\ Clifford \cite{Clifford2017}. This generalization, contrary to its predecessor, works for bunched input states, and similarly avoids an exponential cost in the number of modes $m$.  The proof of the following Theorem follows a similar reasoning to that of \cite{Clifford2017}. The main difference lays in the extensive usage of the first quantization formalism, which makes it easier to perform the necessary combinatorics and, in our opinion, can make the algorithm more approachable for physicists.  
}

\begin{theorem}[Generalization of the algorithm of {P.\ Clifford and R.\ Clifford} to collision input states] \label{thm:binnedBS}
Consider an $n$-photon, $m$-mode boson sampling instance described by linear optical transformation $U$ and an input state $\ket{\vec{S}}=\ket{s_1,s_2,\ldots,s_m}$. We do not assume $s_i\leq 1$. \daniel{There exists} a classical algorithm that generates a sample from the output probability distribution $p_U\left(\vec{S}\rightarrow\vec{T}\right)$ in time upper bounded by
  \begin{equation}\label{eq:timeOVERALL}
\mathcal{T} = nm \prod_{j=1}^m(s_j+1)^2  \alpha_S\ ,
\end{equation}
where $\alpha_S$ is the number of nonzero $s_i$'s. Moreover, the algorithm also outputs the probability corresponding to the sampled outcome.
\end{theorem}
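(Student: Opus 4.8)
The plan is to generalize the chain-rule sampling strategy of \cite{Clifford2017} to collision inputs, working throughout in first quantization. Let $A$ be the $m\times n$ matrix whose columns are the single-photon output states $U\ket{i}$ repeated with multiplicity $s_i$; concretely $A_{z,l}=U_{z,c(l)}$, where $c(l)\in[m]$ is the input mode of the $l$-th particle and mode $j$ occurs with multiplicity $s_j$. I first record the elementary fact that, for any ordered list of output modes $(w_1,\dots,w_n)$, the permanent of the $n\times n$ matrix with rows $w_1,\dots,w_n$ and all columns of $A$ equals $\Per(U_{S,T})$ whenever $\{w_1,\dots,w_n\}$ has occupations $\vec T$. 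Summing the ordered probability $\tilde P(w_1,\dots,w_n)\propto |\Per|^2/\prod_i s_i!$ over the $n!/\prod_j t_j!$ orderings consistent with $\vec T$ reproduces exactly $p_U(\vec S\to\vec T)$ of \cref{eq:statistics}. Hence it suffices to sample the ordered sequence $(w_1,\dots,w_n)$ from $\tilde P$ and then forget the order.

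Next I would sample $(w_1,\dots,w_n)$ one photon at a time using the chain rule, $\tilde P(w_1,\dots,w_n)=\prod_k \tilde P(w_k\mid w_1,\dots,w_{k-1})$. The central computation is the $k$-th marginal $\tilde P_k(w_1,\dots,w_k)=\sum_{w_{k+1},\dots,w_n}\tilde P$. Writing $|\Per|^2$ as a double sum over permutations $\sigma,\tau$ and summing out the trailing output modes using the column-orthonormality $\sum_z U_{z,a}\overline{U_{z,b}}=\delta_{ab}$, the tail collapses to a product of Kronecker deltas, leaving a constrained double sum over the mode-assignments of the first $k$ particles. I then evaluate this object through the generalized Glynn-type expansion of the permanent of a matrix with repeated columns, in the form given by \cite{Chin2017}: grouping the signs assigned to the $s_j$ identical columns of mode $j$ (their contribution depends only on the signed count, which takes $s_j+1$ values with binomial weights), the single-permanent expansion has $\prod_j(s_j+1)$ distinct terms rather than $2^{n-1}$. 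Because the marginal is a modulus-squared, the amplitude and its conjugate each carry such a signed-count index, and the tail constraint couples them, so the efficient representation maintains a table indexed by a pair of per-mode signed counts — of size $\prod_j(s_j+1)^2$.

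The algorithm then proceeds incrementally: I maintain this table of partial sums after placing $w_1,\dots,w_{k-1}$, and to obtain the conditional distribution of $w_k$ I update it with the new-row factors for each of the $m$ candidate output modes. Forming a new-row factor requires summing the signed contributions over the $\alpha_S$ occupied input modes, costing $O(\alpha_S)$, and there are $\prod_j(s_j+1)^2$ table entries, $m$ candidate modes, and $n$ photons, which gives the stated bound $\mathcal T=nm\prod_{j=1}^m(s_j+1)^2\,\alpha_S$. Once $w_k$ is drawn from the resulting conditional — properly normalized since it is a genuine marginal of $\tilde P$ — I commit its factor into the table and continue. The probability of the final sample is read off as the product of the conditionals computed along the way, which yields the last claim, and correctness follows from the exactness of the chain rule together with the marginal identity above, with no approximation introduced.

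The step I expect to be the main obstacle is the derivation and incremental evaluation of the marginal $\tilde P_k$ for collision inputs: one must show that, after contracting the trailing output modes, the resulting constrained double sum over permutations factorizes over input modes in the Chin expansion, and that the associated $\prod_j(s_j+1)^2$ table can be updated — rather than recomputed — at cost $O(m\,\alpha_S)$ per table entry per photon. Establishing this factorization together with the precise weights, including the binomial multiplicities for each group of identical columns, is the combinatorial heart of the proof; the chain-rule and normalization arguments are routine by comparison.
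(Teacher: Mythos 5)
Your overall strategy coincides with the paper's: enlarge the sample space to ordered output tuples, sample via the chain rule, obtain the marginals by contracting the trailing output modes (your column-orthonormality argument is the algebraic counterpart of the paper's partial-trace computation in \cref{app:margprobs}), and use the Chin--Huh expansion \cref{eq:permeff} to exploit the repeated columns of a collision input. The set-up is correct: the normalization of the ordered distribution, the equivalence with \cref{eq:perPDF}, and the reduction of each conditional to unnormalized marginals all check out.

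The gap is exactly where you flag it, and it is a genuine one rather than a routine verification. You propose to hold the marginal $\tilde P_k$ in a single table of size $\prod_{j=1}^m(s_j+1)^2$ indexed by a \emph{pair} of per-mode signed counts, updated at cost $O(\alpha_S)$ per entry per photon. Two obstacles. First, after contracting the tail one does not obtain a single squared permanent but a weighted sum of $N_k(\vec S)$ squared permanents $\lvert \Per (U_{\vec S-\vec K,(w_1\ldots w_k)})\rvert^2$, one for each way $\vec K$ of deleting $n-k$ photons from $\vec S$ (in the paper this is the statement that $\tr_{n-k}\ketbra{\vec S}{\vec S}$ is a convex combination over reduced configurations with multinomial weights). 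In the Chin--Huh expansion of each such permanent the row factors are $\sum_j (s_j-K_j-2v_j)U_{j,w_i}$: they depend on the reduced occupations $s_j-K_j$, not only on the signed counts $v_j$, so the sum over $\vec K$ does not obviously fold into your $(v,v')$ table --- naively a third index of size $\prod_j(s_j+1)$ appears. Second, even granting a factorized representation, ``update rather than recompute'' requires dividing out the old row factor when position $k$ switches from a contracted row to a detected row, and that factor can vanish on some table entries. Neither point is established. The bound \eqref{eq:timeOVERALL} does not actually require the incremental table: the paper's blunter accounting computes, at step $l$, each of the $m$ conditional weights as a sum of $N_l(\vec S)$ permanents recomputed from scratch via \cref{eq:uppBOUND} at cost $n\prod_j(s_j+1)\alpha_S$ each, and since $\sum_{l=1}^n N_l(\vec S)=\prod_j(s_j+1)-1$ the total is $nm\prod_j(s_j+1)^2\alpha_S$, with one factor of $\prod_j(s_j+1)$ counting permanents and the other counting terms inside each permanent. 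Replacing your table with this explicit decomposition closes the gap and leaves the rest of your argument intact.
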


\begin{proof}

Let $\vec{S} = ( s_1, s_2, \ldots s_m) $ encode the input state with $s_i$ photons in mode $i$ such that $|\vec{S}|=\sum s_i = n$. Define similar quantities for the output state $\vec{T} = (t_1, t_2 \ldots t_m)$. In what follows the initial configuration of photons $\vec{S}$ is fixed while $\vec{T}$ denote the possible outcomes of boson sampling experiment.

We split the proof into two parts. First, we recall the results of \cite{Chin2017} to give bounds on the computation cost on individual  probability amplitudes $p_U\left(\vec{S}\rightarrow\vec{T}\right)$. Then we show, by generalizing the ideas from \cite{Clifford2017}, how to sample from the output probability distribution by sampling the evolution of the individual particles one at the time, sidestepping the need to compute exponentially many individual probabilities. 

We start by reviewing results of \cite{Chin2017}. According to Eq.\ \eqref{eq:statistics}, the probability of some outcome $\vec{T}$ is proportional to $\lvert\Per(U_{ST})\lvert^2$. This permanent can be expanded as follows:
\begin{equation} \label{eq:permeff}
\Per(U_{ST}) = \frac{1}{2^n} \sum_{v_1=0}^{s_1} \cdots \sum_{v_m=0}^{s_m} (-1)^{N_v} \binom{s_1}{v_1} \cdots \binom{s_m}{v_m} \prod_{i=1}^m\left[\sum_{j=1}^m (s_j-2 v_j)U_{jk}\right]^{t_i},
\end{equation}
where $N_v=\sum_{i=1}^m v_i$. Let us compute the cost of calculating the permanent in this manner. First note there are $(s_i+1)$ terms for each of the $v_i$ sums. For each term in that multiple sum we need to compute
\begin{equation*}
\prod_{i=1}^m\left[\sum_{j=1}^m (s_j-2 v_j)U_{jk}\right]^{t_i}.
\end{equation*}
The product requires computing $\alpha_T$ terms (since those with $t_k = 0$ do not contribute), and the sum has $\alpha_S$ terms for similar reasons (i.e., if $s_i=0$ then $v_i=0$ as well). Since the absolute value of the  permanent is invariant under exchanging $S$ and $T$, computing $\Per(U_{ST})$ according to the expansion in \cref{eq:permeff} has a runtime of
\begin{equation}\label{eq:uppBOUND}
\tau_{S,T} =  \min\left(\prod_{i=1}^m(s_i+1),\prod_{j=1}^m(t_j+1) \right) \alpha_S \alpha_T \ .
\end{equation}

For the second part of the proof, we move from the usual notation of second quantization (or mode occupation lists) where a state is represented as a list of occupation numbers [i.e. $\vec{T} = ( t_1, t_2, \ldots t_m)$], to a notation native to first quantization  (or mode assignment lists). In this notation states are represented by multisets of $n$ elements in $[m]$, which we represent as tuples written in nondecreasing order, i.e.\ $\vec{z} = (z_1 \ldots z_n)$. Each $z_i$ represents in which mode photons $i$ is. Mapping back to the previous notation is easy: just identify $t_i$ as the multiplicity of mode $i$ in $\vec{z}$.  More details on the mapping between these descriptions can be found in  \cref{app:margprobs}.

The first step in constructing the sampling algorithm is to enlarge the sample space. Instead of sampling output states as multisets $\vec{z}$, we can sample a tuple $\vec{r} \in [m]^n$ and rewrite it in nondecreasing order to obtain the output $\vec{z}$. It is easy to show [see Eq.\eqref{eq:perPDF1stb} in \cref{app:margprobs}]  that sampling from the distribution
\begin{equation}\label{eq:perPDF}
p_U(\vec{S}\rightarrow\vec{r}) = \frac{1}{n!} \frac{1}{\prod_{i=1}^m s_i!} \left \lvert \Per{U_{S,\vec{r}}} \right \lvert^2,
\end{equation}
and then reordering $\vec{r}$ is equivalent to sampling from the correct boson sampling distribution. Notice that, in the above equation, we have expanded the sample space for the outputs but not for the inputs, since those are fixed (and of type A or B). We have also defined $U_{S,\vec{r}}$ in a similar spirit to $U_{S,T}$ just above \cref{eq:permeff}, but now for each $r_i$ in $\vec{r}$ we take row $r_i$ of $U_S$. The claim that sampling from \cref{eq:perPDF} is equivalent to sampling from the correct distribution follows from the fact that the permanent is invariant under permutations of rows and columns, and so it is equal for all $\vec{r}$'s that, when reordered, lead to the same $\vec{z}$.

The second step is to find the marginal probability mass function (pmf) of leading subsequences of $(r_1, r_2, \ldots r_n)$. The combinatorial factors that appear in the expression for this pmf make it cumbersome to write explicitly. In \cref{app:margprobs} we show how to compute these marginal pmfs, and show that they have two properties crucial for the simulation algorithm (which they share with the corresponding pmfs for no-collision inputs, i.e.\ Lemma 1 in \cite{Clifford2017}): 

\begin{itemize}
  \item[(i)] the probability of outcome $(r_1, r_2, \ldots r_l)$, for $l \leq n$, is the sum over probabilities corresponding to all inputs consistent with choosing $l$ photons from the actual $n$-photon input state $S$, and 
  \item[(ii)] each term in that sum corresponds to an input with the same binning structure as $S$. 
\end{itemize}
Therefore, if for some input $S$ the permanents $\Per(U_{ST})$ can be efficiently computed, then the same is true for \emph{individual marginal probabilities} of the distribution $ p_U(\vec{S}\rightarrow\vec{r})$.

We now describe the sampling algorithm. This is done by sampling the marginal distributions one at a time. That is, we use the chain rule for conditional probabilities to write
 \begin{equation} \label{eq:chain}
 p(\vec{r}) = p(r_1)p(r_2|r_1)p(r_3| r_1, r_2) \ldots p(r_n| r_1, \ldots, r_{n-1}).
 \end{equation}
It is easy to see that, for fixed $r_1$, sampling $r_2$ according to $p(r_2|r_1)$ is equivalent to sampling it according to (unnormalized) weights $p(r_1, r_2)$, and similarly for the subsequent $r_i$'s. Therefore, the above expression allows us to sample the elements of $(r_1, r_2, \ldots r_n)$ in order, one at a time, by computing the corresponding marginal probabilities. 

All that remains is to determine the running time necessary to generate a sample with the above procedure. There are $n$ steps. In step $l$, there are $m$ probabilities $p(r_1, \ldots, r_l)$ we need to compute according to the possible values of $r_l$ (recall that $r_1 $ to $ r_{l-1}$ are fixed by previous steps). Each probability requires computing some number of permanents, as per property (ii) above [cf.\ the discussion after \cref{eq:perPDFpartfinal} in \cref{app:margprobs}]. Each of these permanents can be computed in time equal to at most
\begin{equation}\label{eq:uppGLOB}
\tau= n \prod_{i=1}^m(s_i+1)\alpha_S  \ . 
\end{equation}
This follows from Eq.\ \eqref{eq:uppBOUND}, and details can be found in \cref{app:CompCost}. Therefore, the runtime of the algorithm can be upper bounded by 
\[
m\cdot\sum_{l=1}^n N_l(\vec{S}) \cdot \tau \ ,
\]
where $N_l(\vec{S})$ is the number of $l$-photon configurations \michal{ that can result by removing $n-l$ photons from  the initial configuration $\vec{S}$ [see discussion preceding Eq.\eqref{eq:comp}]}. Finally, note that $\sum_{l=1}^n N_l(\vec{S})= \prod_{i=1}^m(s_i+1) -1$, which concludes the proof.
\end{proof}

 {A straightforward inspection of Eq.\ \eqref{eq:timeOVERALL} shows that, for the standard collision-free boson sampling input (i.e.\ $s_i=1,\ i=1,\ldots,n$), our algorithm remains with an exponential (in $n$) running time.  However, using this formula we can identify two new families of input states for which boson sampling can be efficiently simulated classically.} 

\begin{corollary}[Efficient classical simulation of boson sampling for certain classes of unbalanced inputs] \label{cor:TwoTypesClass}
  Consider the following two types of input states:
  \begin{itemize}
    \item[A]: A Fock state with photons distributed arbitrarily in $k=$O$(1)$ different input modes (or bins).
    \item[B]: A Fock state with photons distributed in $k$=$ c\log n $ bins such that $k-1$ modes contain a single photon and all remaining photons are in the $k$th mode.
  \end{itemize}
Then, from \eqref{eq:timeOVERALL} it follows that there exists an efficient classical algorithm to simulate the corresponding instances of boson sampling. The runtimes of this algorithm for inputs of type A and B are, respectively,
\begin{align*}
\mathcal{T}_A & = k m n  (n+1)^{2k},  \\
\mathcal{T}_B&= c m n^{2c+3}  \log n.
\end{align*}
If $c$ and $k$ are constants, and in the standard boson sampling assumption that $m = $O$(n^2)$, we have
\begin{align*}
\mathcal{T}_A & = \textrm{O} \left(n^{3+2k} \right),  \\
\mathcal{T}_B&= \textrm{O} \left(n^{6+2c}\right).
\end{align*}
\end{corollary}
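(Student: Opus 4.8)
The plan is to obtain both runtimes by directly substituting the occupation patterns of the two input families into the general bound \eqref{eq:timeOVERALL} of \cref{thm:binnedBS}, namely $\mathcal{T} = nm \prod_{j=1}^m (s_j+1)^2\, \alpha_S$, and then simplifying. For each family the only quantities I need to extract are (i) the number of occupied modes $\alpha_S$ and (ii) the product $\prod_{j=1}^m (s_j+1)^2$, in which the unoccupied modes contribute a factor $1$ and can be discarded. Once these are computed the stated runtimes follow at once, and the asymptotic estimates then follow by inserting the standard scaling $m = \mathrm{O}(n^2)$ and treating $c$ and $k$ as constants.

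For type A I would note that there are $\alpha_S = k$ occupied modes, each holding at most $n$ photons, so every nonzero factor obeys $(s_j+1)^2 \leq (n+1)^2$ and hence $\prod_{j=1}^m (s_j+1)^2 \leq (n+1)^{2k}$. Substituting into \eqref{eq:timeOVERALL} yields $\mathcal{T}_A = kmn(n+1)^{2k}$, and with $k=\mathrm{O}(1)$ and $m=\mathrm{O}(n^2)$ the factors combine to $\mathrm{O}(n^2)\cdot\mathrm{O}(n)\cdot\mathrm{O}(n^{2k}) = \mathrm{O}(n^{3+2k})$.

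For type B I would split the product according to the two kinds of occupied mode. The $k-1$ single-photon modes each contribute $(1+1)^2 = 4$, for a combined factor $4^{k-1}$, while the single heavily occupied mode carries $n-k+1$ photons and contributes $(n-k+2)^2 \leq (n+1)^2$. The one step worth stating with care is the simplification $4^{k-1} \leq 4^{k} = n^{2c}$, which uses $k = c\log n$ with the logarithm taken in base $2$; this is the only place where the exponent arithmetic can go astray. Combining this with $\alpha_S = k = c\log n$ gives $\prod_{j=1}^m (s_j+1)^2 \leq n^{2c}(n+1)^2$ and hence $\mathcal{T}_B = c\,m\,n^{2c+3}\log n$. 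Inserting $m=\mathrm{O}(n^2)$ leaves $\mathrm{O}(n^{5+2c}\log n)$, which I would absorb into $\mathrm{O}(n^{6+2c})$ using $\log n = \mathrm{O}(n)$; the bound is thus loose by one power of $n$, traded for a clean polynomial.

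I do not expect a genuine obstacle here, since the statement is a direct corollary of \cref{thm:binnedBS}: the whole content is the substitution together with the observation that in both families $\prod_j (s_j+1)^2$ is polynomial rather than exponential in $n$, in contrast to the collision-free input, where the $n$ occupied modes each contribute a factor $4$ and the product is $4^n$. The only things demanding attention are the exponent bookkeeping in type B and the implicit base-$2$ convention for the logarithm that makes $4^{c\log n}$ collapse to $n^{2c}$.
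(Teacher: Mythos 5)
Your proposal is correct and follows exactly the route the paper intends: the corollary is stated as a direct substitution of the two occupation patterns into Eq.~\eqref{eq:timeOVERALL}, and your bookkeeping for $\alpha_S$, the product $\prod_j (s_j+1)^2$, the base-$2$ identity $4^{c\log n}=n^{2c}$, and the absorption of the residual $\log n$ into one extra power of $n$ all match the stated runtimes. No gap.
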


Note that $\mathcal{T}_B$ is not a consequence only of having $\O (\log n)$ bins. To see this, suppose we have $\log n$ bins with $n/\log n$ photons each. In this case the running time for the computation of a single permanent according to  Eq.\ \eqref{eq:uppGLOB} is
\begin{equation}
\tau = \O \left[\left(\frac{n}{\log n}\right)^{\log n} n \log n\right],
\end{equation}
which is super-polynomial in $n$. Therefore, the restriction that inputs of type B have most photons starting in a single mode is important.

We finally remark that types A and B of inputs states do not exhaust all the cases for which the reasoning given in the proof of Theorem \ref{thm:binnedBS} gives an efficient classical sampling algorithm. In particular, it was pointed out to us by Alex Moylett that the following class of input states, interpolating between types A and B,  is also classically simulable:  
  \begin{itemize}
    \item[C]: An $n$-photon Fock state with photons distributed \daniel{in} $k=k_A + k_B$ input modes, where  $k_A=\O(1)$ and $k_B = c \log n$. Moreover, photons located in modes ``A'' are distributed arbitrarily and remaining photons follow the distribution specified in the definition of input of type B.
    \end{itemize}

\section{Efficient simulation of lossy boson sampling with some lossless inputs} \label{sec:result3}

We now combine the ideas underlying the classical simulation of lossy boson sampling from \cite{Oszmaniec2018} with the results from the preceding section to expand the known scenarios for which boson sampling becomes classically simulable. We begin by recalling the notion of total variation distance---the figure of merit typically used to assess the accuracy of boson sampling \cite{Aaronson2013a}. The total variation distance between probability distributions $\lbrace p_x \rbrace$ and $\lbrace q_x \rbrace$  is given by
\begin{equation}\label{eq:TV}
d_\mathrm{TV}\left( \lbrace p_x \rbrace,\lbrace q_x \rbrace \right) = \frac{1}{2} \sum_x |p_x -q_x| \ .
\end{equation}
Importantly, the total variation distance between probabilities $\lbrace p^{\rho}_x \rbrace$ and $\lbrace p^{\sigma}_x \rbrace$ generated by measuring two quantum states $\rho$ and $\sigma$, respectively, with the same measurement $\mathbf{M}=\lbrace M_x \rbrace$ is always upper bounded by the trace distance between $\rho$ and $\sigma$,
\begin{equation}\label{eq:TVbound}
d_\mathrm{TV}\left( \lbrace p^{\rho}_x \rbrace, \lbrace p^{\sigma}_x \rbrace  \right) \leq d_\mathrm{tr}(\rho,\sigma)\ .  
\end{equation}

We are now ready to formally state our result.

\begin{theorem}[Classical simulation of lossy states when photons in some modes are not lost]\label{thm:notLOSTclass}
Consider an $m$-mode $n$-photon input state 
\begin{equation}
\ket{\Psi_0}=\ket{\overbrace{1\cdots 1}^n \overbrace{0\cdots 0}^{m-n}}
\end{equation} 
and a linear-optical transformation $\Lambda= \Lambda'\circ \Lambda_{k,\eta}$, where $\Lambda'$ is some (perhaps lossy) linear optical transformation, and $\Lambda_{k,\eta}$ is a pure loss channel such that the first $k$ modes have zero loss, while the remaining $m-k$ modes have transmission probability $\eta$, i.e.\ we have
\begin{equation}\label{eq:ASSloss}
\vec{\eta}=(\overbrace{1,\ldots,1}^k, \overbrace{\eta,\ldots,\eta}^{m-k})\ .
\end{equation} 
Let $\lbrace p^{\Lambda}_x\rbrace$ be the probability distribution obtained by measuring the output state $ \Lambda \left(\ket{\Psi_0}\bra{\Psi_0} \right)$ with particle-number measurements. Then, provided $k=\O(\log n)$, for any $\Lambda$ \daniel{there exists} a (weakly) classically simulable distribution $\lbrace q^{\Lambda}_x \rbrace$ which satisfies
\begin{equation}
d_\mathrm{TV}\left( \lbrace p^{\Lambda}_x \rbrace,\lbrace q^{\Lambda}_x \rbrace \right) \leq   \frac{\eta^2 (n-k)}{2} +\frac{\eta(1-\eta)}{2}\ .
\end{equation}
Consequently, if \daniel{$\eta = o(1/\sqrt{n-k})$}, the error incurred in approximating $ \lbrace p^{\Lambda}_x \rbrace$ with the (classically easy) distribution $\lbrace q^{\Lambda}_x \rbrace$ goes to $0$ as the number of particles $n$ tends to infinity.
\end{theorem}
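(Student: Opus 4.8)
The plan is to construct the approximating state $\sigma$ by leaving the $k$ lossless single photons untouched and replacing only the \emph{lossy} part of the post-loss input with the distinguishable-particle approximation of \cite{Oszmaniec2018}, then to recognize that the resulting $\sigma$ is a convex combination of states of type~B so that \cref{cor:TwoTypesClass} applies. The error estimate will follow from contractivity of the trace distance together with \eqref{eq:TVbound}.

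First I would decompose the state obtained immediately after the loss channel. Since loss acts mode-wise and $\ket{\Psi_0}$ is a product (in the occupation representation) of single-photon and vacuum Fock states, the state $\rho = \Lambda_{k,\eta}(\ketbra{\Psi_0}{\Psi_0})$ factorizes across the partition into the first $k$ (lossless) modes and the remaining modes as $\rho = \ketbra{1^k}{1^k} \otimes \rho_{\mathrm{lossy}}$, where $\rho_{\mathrm{lossy}} = \Lambda_\eta\!\left(\ketbra{1^{n-k}}{1^{n-k}}\right)$ is exactly the output of a uniform loss channel of transmission $\eta$ acting on the $n-k$ single photons in the lossy occupied modes. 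This reduces the lossy sector to the precise setting treated in \cite{Oszmaniec2018}.

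Next I would invoke the main technical result of \cite{Oszmaniec2018} on $\rho_{\mathrm{lossy}}$: the lossy bosonic state of $n-k$ photons at transmission $\eta$ admits an approximation $\sigma_{\mathrm{lossy}}$ in which the surviving photons behave as distinguishable particles, with $d_\mathrm{tr}(\rho_{\mathrm{lossy}},\sigma_{\mathrm{lossy}}) \leq \tfrac{1}{2}\eta^2(n-k) + \tfrac{1}{2}\eta(1-\eta)$. Setting $\sigma = \ketbra{1^k}{1^k}\otimes\sigma_{\mathrm{lossy}}$ and using that tensoring with a fixed state leaves the trace distance unchanged, I obtain $d_\mathrm{tr}(\rho,\sigma) = d_\mathrm{tr}(\rho_{\mathrm{lossy}},\sigma_{\mathrm{lossy}})$, which is the claimed bound. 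The crucial structural observation is that $\sigma$ is a convex combination, indexed by the binomially distributed number $l$ of survivors, of Fock states in which the $l$ survivors occupy a single heavily-populated mode (the canonical distinguishable configuration) together with the $k$ untouched single-photon modes; since $k=\O(\log n)$, each such term has $k+1 = \O(\log n)$ occupied modes and is therefore a state of type~B.

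Finally I would produce $\{q^\Lambda_x\}$, the distribution obtained by propagating $\sigma$ through $\Lambda'$ and measuring, by first sampling $l$ from the binomial distribution (elementary) and then applying \cref{cor:TwoTypesClass} to the corresponding type~B instance — dilating the possibly-lossy $\Lambda'$ to a unitary on polynomially many extra modes and marginalizing them out, which preserves the type~B structure and keeps the cost polynomial. The error bound closes by applying contractivity and \eqref{eq:TVbound}: since $\Lambda'$ and the measurement are CPTP, $d_\mathrm{TV}(\{p^\Lambda_x\},\{q^\Lambda_x\}) \leq d_\mathrm{tr}(\Lambda'(\rho),\Lambda'(\sigma)) \leq d_\mathrm{tr}(\rho,\sigma) \leq \tfrac{\eta^2(n-k)}{2}+\tfrac{\eta(1-\eta)}{2}$, and when $\eta = \o(1/\sqrt{n-k})$ both terms vanish as $n\to\infty$. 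I expect the main obstacle to be the faithful import of the \cite{Oszmaniec2018} approximation — in particular, verifying that its distinguishable-particle state, once combined with the lossless single photons, is genuinely a convex combination of type~B Fock states carrying the exact constant quoted above, rather than merely a distinguishable state living in an enlarged internal Hilbert space.
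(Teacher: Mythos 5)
Your proposal is correct and follows essentially the same route as the paper: the paper likewise keeps the $k$ lossless single photons intact, applies the particle-separable approximation of \cite{Oszmaniec2018} only to the lossy sector to get the bound $\tfrac{\eta^2(n-k)}{2}+\tfrac{\eta(1-\eta)}{2}$, and notes that each term in the resulting convex combination is a type~B input $\ket{1\cdots 1, n_\alpha, 0\cdots 0}$ up to a linear-optical unitary $\Lambda_{U_\alpha}$ acting trivially on the first $k$ modes. The worry you flag at the end is resolved exactly by that last point: the ``distinguishable'' state $\ketbra{\phi_\alpha}{\phi_\alpha}^{\otimes n_\alpha}$ is a single-mode Fock state in a rotated basis, so the rotation $U_\alpha$ is absorbed into the network $\Lambda'$ and each sampled instance is genuinely a type~B boson sampling problem.
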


\begin{proof}[Proof sketch]
The proof generalizes the main idea from \cite{Oszmaniec2018} where the classical simulation of lossy linear optics was based on approximating the lossy bosonic state on $\tilde{n}$ particles
 by particle-separable states, i.e.\ states of the form
\begin{equation}\label{eq:partSEP}
\tilde{\sigma}= \sum_\alpha \tilde{P}_\alpha  \ketbra{\phi_\alpha}{\phi_\alpha}^{\otimes n_\alpha}\ ,
\end{equation}
Importantly, state $\ketbra{\phi_\alpha}{\phi_\alpha}^{\otimes n_\alpha}$ can be generated from $\ket{n_{\alpha} 0\cdots 0}$ by application of linear-optical transformation $\Lambda_{U_\alpha}$.

The lossy state considered in \cite{Oszmaniec2018} was of the form $\tilde{\rho}=\Lambda_\eta (\ketbra{\Phi_0}{\Phi_0})$, where $\Lambda_\eta$ is the channel that describes uniform losses [see Eq.\eqref{eq:BSuniformMODEL}] and $ \ketbra{\Phi_0}{\Phi_0}$ is a standard boson sampling input state on $\tilde{n}$ particles.  It was shown that \daniel{there exists a} particle-separable state \michal{ $\tilde{\sigma}$ and a} probability distribution $\lbrace \tilde{P}_\alpha\rbrace$ that can be sampled efficiently, and such that the trace distance to the target state $\tilde{\rho}$ is upper bounded by 
\begin{equation}\label{eq:appPRV}
d_{\mathrm{tr}}(\tilde{\rho},\tilde{\sigma})\leq \tilde{\Delta} = \frac{\eta^2 \tilde{n}}{2} +\frac{\eta(1-\eta)}{2}\ .
\end{equation}
From the above and the fact that boson sampling with input states $\ketbra{\psi_\alpha}{\psi_\alpha}^{\otimes n_\alpha}$ can be simulated classically it follows that any linear-optical process $\tilde{\Lambda}$ applied to $\tilde{\rho}$ and terminating with particle-number measurements can also be (weakly) simulated classically within total variation distance $\tilde{\Delta}$.

Here we generalize the above argument to the situation at hand. The key difference is that, as the (partially) lossy input state, we take the state $\rho = \Lambda_{k,\eta}(\ketbra{\Psi_0}{\Psi_0})$ for which  no particles are lost  in the first $k$ modes. We can approximate $\rho$ by
\begin{equation}
\sigma =\sum_\alpha P_\alpha \ketbra{\psi_\alpha}{\psi_\alpha}\ ,
\end{equation}    
where, in contrast to Eq.\ \eqref{eq:partSEP}, states entering in the above convex decomposition are of the type $\Lambda_{U_\alpha} (\ketbra{\Psi_{k,\alpha}}{\Psi_{k,\alpha}})$, where $\ket{\Psi_{k,\alpha}}=\ket{\overbrace{1\cdots 1}^k,n_{\alpha},0\cdots 0}$  and $\Lambda_{U_\alpha}$ is a suitable linear optical unitary transformation that acts as the identity on the first $k$ optical modes. If $k=\O(\log n )$, states of type $\ket{\Psi_{k,\alpha}}$ lead to classically easy boson sampling instances (they belong to type B appearing in Theorem \ref{thm:binnedBS}). Repeating the considerations that led to Eq.\eqref{eq:appPRV} we conclude it is possible to choose a distribution $\lbrace P_\alpha \rbrace$ which can be sampled efficiently and for which we have
\begin{equation}
d_{\mathrm{tr}}(\rho,\sigma)\leq \Delta = \frac{\eta^2 (n-k)}{2} +\frac{\eta(1-\eta)}{2}\ .
\end{equation}  
By the assumptions stated above, sampling from $\Lambda'(\sigma)$ can be done efficiently, which concludes the proof. 
\end{proof}

\section{Classical simulation of linear optics in unbalanced lossy networks} \label{sec:result4}

In this Section we combine results from Theorems \ref{thm:nonuniform} and \ref{thm:notLOSTclass} to show that, for arbitrary lossy linear networks, boson sampling becomes classically simulable provided there are at most O($\log n$) input modes for which the input-output paths are short.

\begin{theorem}[Classical simulation in unbalanced lossy networks]\label{thm:notUNIFclass}
Consider an $m$-mode lossy optical linear network $\N$ acting on the standard input $n$-photon state of the form
\begin{equation}
\ket{\Psi_0}=\ket{\overbrace{1\cdots 1}^n \overbrace{0\cdots 0}^{m-n}}\ .
\end{equation}
Assume every beam splitter introduces losses, characterized by transmissivity $\eta$, on the modes on which it acts. Moreover, assume that at most $\O( \log n)$ \daniel{inputs} are such that their shortest input-output path has length smaller than $c \log n$ for some suitable constant $c>0$. Then \daniel{there exists} a classical algorithm to sample from a $\lbrace q^{\N}_x \rbrace$ that approximates the target  output distribution $\lbrace p^{\N}_x \rbrace$, obtained by evolving the input state through network $\N$, to accuracy
\begin{equation}\label{eq:AAPRX}
d_\mathrm{TV}\left( \lbrace p^{\N}_x \rbrace,\lbrace q^{\N}_x \rbrace \right)=\Delta_\N \leq \frac{\eta_\mathrm{eff}^2 (n-k)}{2} +\frac{\eta_\mathrm{eff}(1-\eta_\mathrm{eff})}{2}
\end{equation}
where $\eta_{\mathrm{eff}}=\eta^{c \log(n)}= n^{- c \log(1/\eta)}$. In particular, if $c>\frac{1}{2\log(1/\eta)}$, then $\Delta_\N \rightarrow 0$ as the number of photons $n$ increase.
\end{theorem}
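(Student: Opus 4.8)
The plan is to combine the three preceding results in the natural order suggested by the overview. The idea is that Theorem~\ref{thm:nonuniform} lets me relocate the geometry-induced losses to the input, Theorem~\ref{thm:notLOSTclass} handles the resulting partially-lossy input state, and the hypothesis on short paths guarantees that the ``few lossless modes'' assumption of Theorem~\ref{thm:notLOSTclass} is satisfied. Concretely, I would first apply Theorem~\ref{thm:nonuniform} to the lossy network $\N$ to write $\Lambda_\N = \Lambda_{\tilde\N}\circ\Lambda_{\vec\tran}$, where mode $i$ suffers transmission $\tran_i=\eta^{s_i}$ and $s_i$ is the length of the shortest input-output path through mode $i$. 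Thus the entire physical process acting on $\ket{\Psi_0}$ equals the \emph{lossless-extracted} network $\tilde\N$ applied to the state $\Lambda_{\vec\tran}(\ketbra{\Psi_0}{\Psi_0})$, so it suffices to classically simulate boson sampling on this pre-attenuated input passed through $\tilde\N$.

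Next I would split the input modes into the (at most) $\O(\log n)$ modes whose shortest path is short, say $s_i < c\log n$, and the remaining modes. For the latter, $s_i \ge c\log n$ so the transmission obeys $\tran_i = \eta^{s_i} \le \eta^{c\log n} = \tranef$, giving an \emph{effective} uniform loss at least as strong as $\tranef=n^{-c\log(1/\eta)}$. For the short-path modes, I would simply treat them as (at worst) lossless, i.e.\ absorb whatever residual loss they carry into the network $\tilde\N$; since there are only $k=\O(\log n)$ of them, this is exactly the setting of Theorem~\ref{thm:notLOSTclass} with the role of ``$k$ lossless modes'' played by these short-path modes and the role of the uniform transmissivity played by (an upper bound on) $\tranef$. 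The monotonicity $\Lambda_{\tran_2}\circ\Lambda_{\tran_1}=\Lambda_{\tran_1\tran_2}$ noted below Eq.\eqref{eq:BSuniformMODEL} lets me factor each $\tran_i$ with $\tran_i\le\tranef$ as $\tranef$ composed with some extra loss that can again be pushed into $\tilde\N$, so that the input seen by the simulation is the one handled in Theorem~\ref{thm:notLOSTclass}.

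Applying Theorem~\ref{thm:notLOSTclass} with effective transmissivity $\tranef$ and $k=\O(\log n)$ lossless modes then yields a weakly simulable distribution $\lbrace q^{\N}_x\rbrace$ with
\begin{equation}
d_\mathrm{TV}\left( \lbrace p^{\N}_x \rbrace,\lbrace q^{\N}_x \rbrace \right) \le \frac{\tranef^2 (n-k)}{2} +\frac{\tranef(1-\tranef)}{2}\ ,
\end{equation}
which is precisely Eq.\eqref{eq:AAPRX}. Finally, for the asymptotic claim I would substitute $\tranef=n^{-c\log(1/\eta)}$ and observe that the dominant term scales like $\tranef^2 n \approx n^{1-2c\log(1/\eta)}$, which tends to $0$ exactly when $1-2c\log(1/\eta)<0$, i.e.\ $c>\tfrac{1}{2\log(1/\eta)}$; the second term is of lower order since $\tranef\to 0$.

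The main obstacle I anticipate is the bookkeeping in the reduction rather than any hard estimate: I must argue carefully that extra attenuation beyond $\tranef$ on the long-path modes, and any residual loss on the short-path modes, can be legitimately commuted into $\Lambda_{\tilde\N}$ without affecting simulability—i.e.\ that Theorem~\ref{thm:notLOSTclass}'s requirement of \emph{exactly} uniform loss $\eta$ on the non-lossless modes can be relaxed to a \emph{uniform upper bound} $\tranef$ by absorbing the surplus into $\Lambda'$. This is intuitively clear because additional loss only helps (it makes the surviving photon number smaller, so the distinguishable-particle approximation of Theorem~\ref{thm:notLOSTclass} becomes only more accurate), but making this monotonicity precise—so that the same trace-distance bound with $\tranef$ in place of $\eta$ still holds—is the step that needs the most care.
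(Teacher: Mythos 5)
Your proposal is correct and follows essentially the same route as the paper: extract the nonuniform loss layer via Theorem~\ref{thm:nonuniform}, reduce it (up to relabeling) to the form of Eq.~\eqref{eq:ASSloss} with $k=\O(\log n)$ lossless modes and uniform transmissivity $\eta_{\mathrm{eff}}=\eta^{c\log n}$, and invoke Theorem~\ref{thm:notLOSTclass}. The ``obstacle'' you flag at the end is already resolved by your own factoring step: since $\Lambda_{\tran_i}=\Lambda_{\tran_i/\tranef}\circ\Lambda_{\tranef}$ and single-mode loss channels are themselves (lossy) linear-optical maps, the surplus attenuation is simply absorbed into the $\Lambda'$ that Theorem~\ref{thm:notLOSTclass} explicitly permits, so no separate monotonicity argument is needed.
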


\begin{proof}
The result follows directly from Theorems \ref{thm:nonuniform} and \ref{thm:notLOSTclass}. First, note that at most $k$ input modes have their shortest input-output path of length $s_i \leq c \log n$. Then, by Theorem \ref{thm:nonuniform},  we can extract from the network a layer of nonuniform losses that, up to relabeling of modes, is of the form
\begin{equation}\label{eq:simpLOSS}
\vec{\eta}= (\overbrace{1,\ldots,1}^k, \overbrace{\eta_\mathrm{eff},\ldots,\eta_\mathrm{eff}}^{m-k})\ .
\end{equation}
Next, note that assumptions of Theorem \ref{thm:notLOSTclass} are met since $k=\O(\log n)$ and the vector of losses from Eq.\ \eqref{eq:simpLOSS} directly corresponds to  Eq.\ \eqref{eq:ASSloss}. By directly invoking of Theorem \ref{thm:notLOSTclass} we obtain Eq.\ \eqref{eq:AAPRX}, which concludes the proof. 
\end{proof}

\begin{exa}[Triangular decomposition of linear-optical networks \cite{Reck1994}]
Consider the standard construction of linear-optical networks given in Fig.\ \ref{fig:losspullout}. Assume that every beam splitter in the network is lossy. The number of beam-splitters photons need to cross, as they propagate from input $i$ (counting from the bottom of the network) scales linearly \daniel{with} $i$. Therefore, a lossy network in this geometry satisfies the assumptions of Theorem \ref{thm:notUNIFclass} and the corresponding boson sampling instance can be simulated as described in the statement of the Theorem.
\end{exa}

\daniel{The conclusions of \cref{thm:notUNIFclass} are of an asymptotic nature. We note that the error incurred in the classical simulation described here is the same as in \cite{Oszmaniec2018}, and we direct the interested reader to that reference for a discussion on how this error scales for finite-sized experiments.}

\section{Conclusions and outlook}

We presented a comprehensive study of the effect of nonuniform losses on classical simulability of boson sampling. Our main conclusion is that linear-optical networks for which losses exhibit high degree of non-uniformity cannot evade a classical simulation strategies based on ideas from \cite{GarciaPatron2017,Oszmaniec2018}. Our main contributions are twofold. First, in \cref{thm:nonuniform} we established the quantitative relation between the geometric properties of lossy networks and the amount of losses that can be \emph{pulled-out} to the input of the network. Second, in \cref{thm:notUNIFclass} we provided sufficient conditions for effective classical simulability of of boson sampling subject to nonuniform losses (that can result, for example, from the complicated network structure). To establish the second result we extended the  classical simulation algorithm from \cite{Clifford2017} to instances of boson sampling with binned inputs. This allowed us to identify two classes of input states for which the simulation of boson sampling becomes efficient: (A)  when $n$ input photons occupy a constant number of input modes; (B) when all but $O(\log n)$ photons are concentrated on a single input mode, while an additional $O(\log n)$ modes contain one photon each.

We believe that our results can help in understanding which linear-optical designs can offer scalable platforms for demonstration of quantum advantage based on boson sampling. Moreover, the classical simulation techniques presented here can be leveraged to benchmark the performance of photonic networks with complicated input states in the near future. We would like to conclude by stating a number of interesting open questions. 

First, in a recent work \cite{Renema2018} Renema \emph{et al.} proved that, for \emph{typical} (i.e.\ Haar-random) linear-optical interferometers $U$, boson sampling can be classically simulated if only a \emph{constant fraction} of input photons are lost and every optical mode suffers from identical losses.  The technique used in \cite{Renema2018} was based on approximating  the lossy output probability distributions by distributions originating from interference of partially distinguishable photons (with  increasing degree of indistinguishability). It is an interesting question whether input states of type A or type B can be used to obtain similar results for generic multi-mode interferometers.

Another interesting question is what are the consequences of our result for Gaussian boson sampling \cite{GaussBS2016}, a variant of boson sampling that uses squeezed Gaussian states as input. A recent work \cite{Brod2019} studied under which conditions lossy Gaussian boson sampling becomes easy to simulate. We conjecture that our results from \cref{sec:result2} can be adapted to the case of Gaussian boson sampling to prove, for example, that it is also classically simulable if the squeezed states are inputs only in $k=$\textrm{O}$(1)$. We also conjecture that our results of \cref{sec:result4} can be adapted to show that the conclusions of \cite{Brod2019} also hold for unbalanced networks (there, the authors made the standard assumption of uniform losses).

Lastly, nonuniform losses clearly seem to undermine the quantum features of nonclassical states. It is natural to explore the consequences of our results regarding extraction of nonuniform losses from a lossy network (\cref{thm:nonuniform}) for the nonclassical features of quantum states generated by such networks \cite{MultidimOptics2018}.
 
\begin{acknowledgments}
We thank Micha\l\ Horodecki for interesting and fruitful discussions. 
M.\ O.\ acknowledges the financial support by TEAM-NET project (contract no. POIR.04.04.00-00-17C1/18-00). D.\ J.\ B.\ acknowledges funding from Instituto Nacional de Ci\^encia e Tecnologia de Informa\c{c}\~ao Qu\^antica (INCT-IQ/CNPq).
\end{acknowledgments}

\bibliographystyle{unsrtnat}
\bibliography{bosonsamplingrefs2.bib}

\appendix
\section{Marginal probabilities for partial photon outputs}\label{app:margprobs}

In this Appendix, we show how to compute the marginal probabilities necessary for \cref{thm:binnedBS}. This generalizes Lemma 1 of \cite{Clifford2017} by allowing for collision inputs. The proof technique, however, is slightly different from that of \cite{Clifford2017}, as we use the physical description of bosonic states via first quantization rather than combinatorial arguments. Therefore, we hope it might also make the original Lemma 1 of \cite{Clifford2017} more transparent for physicists. 

Throughout this section we denote by $\ket{\vec{T}}$ an $n$-boson, $m$-mode Dicke state. We can write it as
\begin{equation}\label{eq:1to2quantmap}
\ket{\vec{T}} \eqdef \binom{n}{t_1, t_2, \ldots t_m}^{\michal{1/2}} \proj{n}{m} \ket{\vec{r}},
\end{equation}
where $\proj{n}{m}$ is the projector onto the $n$-particle, $m$-mode symmetric subspace, and $\ket{\vec{r}} \eqdef \ket{r_1} \ldots \ket{r_n}$. The state $\ket{\vec{T}}$ is such that $t_i$ is the number of times state $\ket{i}$ appears in $\ket{\vec{r}}$. Conversely, $\vec{r}$ are the tuples as defined just above \cref{eq:perPDF}.

Let now $\ket{\vec{S}}$ and $\ket{\vec{T}}$ be the (not necessarily collision-free) input and output states to a linear-optical circuit $U$. In this formalism, we can write the corresponding transition probability as
\begin{equation}
P_U(\vec{S} \rightarrow \vec{T}) = \tr (\varphi_U \ketbra{\vec{S}}{\vec{S}} \varphi_U\dg \ketbra{\vec{T}}{\vec{T}}),
\end{equation}
where $\varphi_U$ is the unitary transformation induced by $U$ in the $n$-particle Hilbert space, i.e.\ $\varphi_U \eqdef U^{\otimes n}$. The above expression coincides with the standard definition of bosonic transition probabilities in terms of permanents of \cite{Aaronson2013a}.

Expanding the output state as in \cref{eq:1to2quantmap} we can write
\begin{equation}\label{eq:perPDF1st}
P_U(\vec{S} \rightarrow \vec{T}) = \binom{n}{t_1, t_2, \ldots t_m} \tr (\varphi_U \ketbra{\vec{S}}{\vec{S}} \varphi_U\dg \ketbra{\vec{r}}{\vec{r}}),
\end{equation}
where we have used the cyclic property of the trace and the fact that $\proj{n}{m}$ acts trivially on $\varphi_U \ketbra{\vec{S}}{\vec{S}} \varphi_U\dg$. \Cref{eq:perPDF1st} is similar to \cref{eq:perPDF}. It represents the fact that, due to the symmetries of the input state and the dynamics (or, equivalently, the symmetries of the permanent function), in order to sample from the output distribution over symmetric states (or multisets $\vec{z}$) we can equivalently sample from the distribution over tuples $\vec{r}$ given by
\begin{equation}\label{eq:perPDF1stb}
P_U(\vec{S} \rightarrow \vec{r}) = \tr (\varphi_U \ketbra{\vec{S}}{\vec{S}} \varphi_U\dg \ketbra{\vec{r}}{\vec{r}}),
\end{equation}
and reorder the outcome in nondecreasing manner.

\Cref{eq:perPDF1stb} gives us an expression for the full pmf over tuples $\vec{r}$. We now need to compute the marginal probabilities over leading subsequences of $\vec{r}$. To that end, let $\vec{r}_{l} = (r_1, r_2 \ldots r_l)$ denote the subsequence of the first $l$ elements of $\vec{r}$, for $l \leq n$, and let $\vec{r}_{l,C} = (r_l, r_{l+1} \ldots r_n)$ denote the remaining elements of the list over which we wish to marginalize. We can write the marginal pmf for that subsequence as
\begin{equation}\label{eq:perPDFpartial}
P_U(\vec{S} \rightarrow \vec{r}_l) = \tr \left[\varphi_U \ketbra{\vec{S}}{\vec{S}} \varphi_U\dg \left(\ketbra{\vec{r}_{l}}{\vec{r}_{l}} \otimes \sum_{\vec{r}_{l,C}} \ketbra{\vec{r}_{l,C}}{\vec{r}_{l,C}}\right)\right].
\end{equation}
The \michal{sum} inside the trace equals simply to the identity on the subspace of particles $l+1$ to $n$. We can thus write the above as 
\begin{align}
  P_U(\vec{S} \rightarrow \vec{r}_l) & = \tr \left[ \tr_{n-k} (\varphi_U \ketbra{\vec{S}}{\vec{S}} \varphi_U\dg) \ketbra{\vec{r}_{l}}{\vec{r}_{l}} \right] \notag\\
& =  \tr \left[ U^{\otimes l} (\tr_{n-l}  \ketbra{\vec{S}}{\vec{S}}) (U\dg)^{\otimes l} \ketbra{\vec{r}_{l}}{\vec{r}_{l}} \right], \label{eq:perPDFpartfinal}
\end{align}
where by $\tr_{n-l}$ we mean tracing over the last $n-l$ particles, and we have used the property that $\tr_{n-l} (U^{\otimes n} \rho (U\dg)^{\otimes n}) = U^{\otimes l} \tr_{n-l} (\rho) (U\dg)^{\otimes l}$, as discussed in \cite{Oszmaniec2018}.

The result we need already follows from \cref{eq:perPDFpartfinal}. To see this note that $(\tr_{n-l}  \ketbra{\vec{S}}{\vec{S}})$ is a  convex combination over all $l$-photon Dicke states compatible with choosing $l$ out of $n$ photons from input state $\ket{\vec{S}}$ i.e. states of the form
\begin{equation}\label{eq:comp}
\ket{\vec{S}-\vec{K}}=\ket{S_1-K_1,S_2-K_2,\ldots,S_m-K_m}\ ,
\end{equation}
where $\sum_{i=1}^m K_i = n-l$ and $K_i \leq S_i$. The probability that a particular state $\ket{\vec{S}-\vec{K}}$ appears in the decomposition 
of $(\tr_{n-l}  \ketbra{\vec{S}}{\vec{S}})$ is proportional to the multinomial coefficient $\binom{n-l}{K_1, K_2, \ldots K_m}$. Therefore, \cref{eq:perPDFpartfinal} satisfies properties (i) and (ii) laid out above \cref{eq:chain}. The fact that each probability in \cref{eq:perPDFpartfinal} can be computed efficiently for input state of types A and B follows from the discussion in the proof of \cref{thm:binnedBS}. 

\section{Upper bound on the cost of permanents involved in \cref{thm:binnedBS}}\label{app:CompCost}

The maximal runtime  associated to the computation of the individual permanent in the simulation algorithm given in the proof of \cref{thm:binnedBS} can be computed via the following expression.
 \begin{equation}\label{eq:marginals}
\tau_S \eqdef \max_{\michal{l}\in\lbrace 1,n\rbrace} \max_{S'\vdash S, |S'|=l}\max_{T,|T|=l} \tau_{S,T}\ ,
\end{equation}
where
\begin{equation}
\tau_{S,T}=\min\left(\prod_{i=1}^m(s_i+1),\prod_{j=1}^m(t_j+1) \right) \alpha_S \alpha_T\ 
\end{equation}
and the internal maximization is over all $l$-photon input configurations $\vec{S}'$ that are compatible with with the initial state $\vec{S}$ and over $l$-photon outputs $\vec{T}$. Let us note that $\tau_{S,T}$ is a non-decreasing function of individual components of both inputs $\vec{S}$ and $\vec{T}$. Therefore, for a specific number of \michal{particles} we have $\tau_S =\max_{T,|T|=n} \tau_{S,T}=\prod_{i=1}^m(s_i+1)\alpha_S n$. This concludes the proof of the upper bound from Eq.\eqref{eq:uppGLOB}.     
\end{document}